\theoremstyle{plain}
\newtheorem{thm}{\protect\theoremname}
\theoremstyle{definition}
\newtheorem{defn}{\protect\definitionname}
\theoremstyle{plain}
\newtheorem{lem}{\protect\lemmaname}
\theoremstyle{remark}
\newtheorem*{acknowledgement*}{\protect\acknowledgementname}
\DeclareMathOperator{\Tr}{Tr}
\DeclareMathOperator{\I}{I}
\newcommand{\C}{\mathbb{C}}
\newcommand{\R}{\mathbb{R}}
\renewcommand{\sigma}{\upsigma}
\providecommand{\acknowledgementname}{Acknowledgement}
\providecommand{\definitionname}{Definition}
\providecommand{\lemmaname}{Lemma}
\providecommand{\theoremname}{Theorem}
\begin{document}
\title{Gleason-type Theorems from\\
Cauchy's Functional Equation}
\author{Victoria J Wright and Stefan Weigert\\
 Department of Mathematics, University of York\\
 York YO10 5DD, United Kingdom\\
 \texttt{\small{}vw550@york.ac.uk, stefan.weigert@york.ac.uk}}
\date{May 2019}
\maketitle
\begin{abstract}
Gleason-type theorems derive the density operator and the Born rule
formalism of quantum theory from the measurement postulate, by considering
additive functions which assign probabilities to measurement outcomes.
Additivity is also the defining property of solutions to Cauchy's
functional equation. This observation suggests an alternative proof
of the strongest known Gleason-type theorem, based on techniques used
to solve functional equations.
\end{abstract}
\global\long\def\kb#1#2{|#1\rangle\langle#2|}%

\global\long\def\bk#1#2{\langle#1|#2\rangle}%

\global\long\def\braket#1#2{\langle#1|#2\rangle}%

\global\long\def\ket#1{|#1\rangle}%

\global\long\def\bra#1{\langle#1|}%

\global\long\def\cd{\mathbb{C}^{d}}%

\section{Introduction}

Gleason's theorem \cite{Gleason1957} is a fundamental result in the
foundations of quantum theory simplifying the axiomatic structure
upon which the theory is based. The theorem shows that quantum states
must correspond to density operators if they are to consistently assign
probabilities to the outcomes of projective measurements in Hilbert
spaces of dimension three or larger.\footnote{By a c\emph{onsistent} assignment of probabilities we mean one in
which the probabilities for all outcomes of a given measurement sum
to one.}

More explicitly, let $\mathcal{P\left(H\right)}$ be the lattice of
self-adjoint projections onto closed subspaces of a separable Hilbert
space $\mathcal{H}$ of dimension at least three. Consider functions
$f:\mathcal{P\left(H\right)}\rightarrow\left[0,1\right]$, that are
\emph{finitely additive }for projections $P_{1}$ and $P_{2}$ onto
\emph{orthogonal }subspaces of $\mathcal{H}$, i.e. 
\begin{equation}
f\left(P_{1}\right)+f\left(P_{2}\right)=f\left(P_{1}+P_{2}\right)\,.\label{eq:additive projections}
\end{equation}
Gleason's result shows that the solutions of Eq. (\ref{eq:additive projections})
(in finite dimensional Hilbert spaces) \footnote{Gleason proved this result in all separable Hilbert spaces if the
condition of finite additivity is replaced with $\sigma$-additivity.
These conditions are equivalent in finite dimensional Hilbert spaces.
Later Christensen \cite{christensen} showed that the weaker condition
of finite additivity was also sufficient for the result to hold in
infinite dimensions.} necessarily admit an expression 
\begin{equation}
f\left(\cdot\right)=\Tr\left(\rho\cdot\right),\label{eq:Born}
\end{equation}
for some positive-semidefinite self-adjoint operator $\rho$ on $\mathcal{H}$.

The result does not hold, however, in Hilbert spaces of dimension
two since the constraints (\ref{eq:additive projections}) degenerate
in this case: the projections lack the ``intertwining'' property
\cite{Gleason1957} present in higher dimensions. In 2003, Busch \cite{Busch2003}
and then Caves et al. \cite{Caves2004} extended Gleason's theorem
to dimension two by considering \emph{generalised }quantum measurements
described by positive operator-valued measures, or POMs. In analogy
with Gleason's original requirement, a state is now defined as an
additive probability assignment not only on projections but on a larger
set of operators, the space $\mathcal{E}\left(\mathcal{H}\right)$
of \emph{effects}\footnote{An effect $E$ on $\mathcal{H}$ is a self-adjoint operator satisfying
$0\leq\bk{\psi}{E\psi}\leq\bk{\psi}{\psi}$ for all vectors $\ket{\psi}\in\mathcal{H}$.} defined on a separable Hilbert space. Then, in finite dimensional
Hilbert spaces, any function $f:\mathcal{E\left(H\right)}\rightarrow\left[0,1\right]$
satisfying finite additivity,
\begin{equation}
f\left(E_{1}\right)+f\left(E_{2}\right)=f\left(E_{1}+E_{2}\right)\,,\label{eq:additive effects}
\end{equation}
for effects $E_{1},E_{2}\in\mathcal{E\left(H\right)}$ such that 
\begin{equation}
E_{1}+E_{2}\in\mathcal{E\left(H\right)}\,,\label{eq:coexist}
\end{equation}
is found to necessarily admit an expression of the form given in Eq.
(\ref{eq:Born}).\footnote{This result does \emph{not} imply Gleason's result since in dimensions
greater than two the requirement (\ref{eq:additive effects}) is stronger
than the requirement (\ref{eq:additive projections}).} The effects $E_{1}$ and $E_{2}$ are said to \emph{coexist} since
the condition in Eq. (\ref{eq:coexist}) implies that they occur in
the range of a \emph{single} POM. More recently, it has been shown
that this \emph{Gleason-type theorem}\footnote{It is important to clearly distinguish Gleason\emph{-type} theorems
from Gleason's original theorem.} also follows from weaker assumptions: it is sufficient to require
Eq. (\ref{eq:additive effects}) hold only for effects $E_{1}$ and
$E_{2}$ that coexist in \emph{projective-simulable }measurements
obtained by mixing projective measurements \cite{Wright2018}.

Finitely additive functions were first given serious consideration
in 1821 when Cauchy \cite{cauchy1821cours} attempted to find all
solutions of the equation 
\begin{equation}
f\left(x\right)+f\left(y\right)=f\left(x+y\right)\,,\label{eq:cfe}
\end{equation}
for real variables $x,\,y\in\mathbb{R}$. In addition to the obvious
linear solutions, non-linear solutions to \emph{Cauchy's functional
equation} are known to exist \cite{hamel1905basis}. However, the
non-linear functions $f$ satisfying Eq. (\ref{eq:cfe}) cannot be
Lebesgue measurable \cite{banach1920equation}, continuous at a single
point \cite{Darboux1875continuityatapoint} or bounded on any set
of positive measure \cite{kestelman1947functional}. Similar results
also hold for Cauchy's functional equation with arguments more general
than real numbers, reviewed in \cite{aczel1966lectures}, for example.

Recalling that the Hermitian operators on $\C^{d}$ form a real vector
space, it becomes clear that the Gleason-type theorems described above
can be viewed as results about the solutions of Cauchy's functional
equation for \emph{vector-valued arguments}: additive functions on
subsets of a real vector space, subject to some additional constraints,
are necessarily linear. Taking advantage of this connection, we use
results regarding Cauchy's functional equation to present an alternative
proof of known Gleason-type theorems.

In Sec. 2, we spell out four conditions that single out \emph{linear
}solutions to Cauchy's functional equation defined on a finite interval
of the real line. The main result of this paper---an alternative
method to derive Busch's Gleason-type theorem---is presented in Sec.
3. We conclude with a summary and a discussion of the results in Sec.~\ref{sec:Summary-and-discussion}.

\section{Cauchy's functional equation on a finite interval\label{sec: Cauchy finite interval}}

In 1821 Cauchy \cite{cauchy1821cours} showed that a \emph{continuous}
function over the real numbers satisfying Eq. (\ref{eq:cfe}) is necessarily
linear. It is important to note, however, that relaxing the continuity
restriction does allow for non-linear solutions \cite{hamel1905basis},
as pathological as they may be.\footnote{The existence of non-linear solutions depends on the existence of
Hamel bases and, thus, on the axiom of choice.} Other conditions known to ensure linearity of a finitely additive
function include Lebesgue measurability \cite{banach1920equation},
positivity on small numbers \cite{Darboux1880smallpositive} or continuity
at a single point \cite{Darboux1875continuityatapoint}. We begin
by proving a related result, in which the domain of the function is
restricted to an interval, as opposed to the entire real line.
\begin{thm}
\label{Thm bounded Cauchy}Let $a>0$ and $f:\left[0,a\right]\rightarrow\mathbb{R}$
be a function that satisfies 
\begin{equation}
f\left(x\right)+f\left(y\right)=f\left(x+y\right)\,,\label{eq:additivethm}
\end{equation}
for all $x,y\in\left[0,a\right]$ such that $\left(x+y\right)\in\left[0,a\right]$.
The function $f$ is necessarily linear, i.e. 
\begin{equation}
f\left(x\right)=\frac{f\left(a\right)}{a}x\,,\label{eq: f is linear}
\end{equation}
if it satisfies any one of the following four conditions:
\begin{enumerate}
\item \label{enu: Cauchy 1} $f\left(x\right)\leq b$ for some $b\geq0$
and all $x\in\left[0,a\right]$;
\item \label{enu:Cauchy 2} $f\left(x\right)\geq c$ for some $c\leq0$
and all $x\in\left[0,a\right]$;
\item \label{enu:Cauchy 3} $f$ is continuous at zero;
\item \label{enu:Cauchy 4} $f$ is Lebesgue-measurable.
\end{enumerate}
\end{thm}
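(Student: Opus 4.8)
The plan is to keep the whole argument inside the interval $[0,a]$ and to route all four cases through a single statement, namely continuity of $f$ at $0$. First I would record the consequences of (\ref{eq:additivethm}) valid for every solution $f$, with no regularity assumed: putting $x=y=0$ gives $f(0)=0$; induction gives $f(nx)=nf(x)$ whenever $n\in\mathbb{N}$ and $nx\in[0,a]$; specialising $x=a/n$ gives $f(a/n)=f(a)/n$, hence $f(qa)=qf(a)$ for every rational $q\in[0,1]$; and the decomposition $x+(a-x)=a$ gives the reflection identity $f(x)+f(a-x)=f(a)$ on $[0,a]$. These facts already settle hypothesis (\ref{enu:Cauchy 3}): if $f$ is continuous at $0$ then additivity propagates continuity to every point (use $f(x)=f(x_{0})+f(x-x_{0})$ for $x\ge x_{0}$ and $f(x)=f(x_{0})-f(x_{0}-x)$ for $x\le x_{0}$, together with $f(0)=0$), and a continuous $f$ that agrees with $\tfrac{f(a)}{a}\,x$ on the dense set $\{qa:q\in\mathbb{Q}\cap[0,1]\}$ must equal it throughout $[0,a]$.

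For the bounded hypotheses (\ref{enu: Cauchy 1}) and (\ref{enu:Cauchy 2}) I would reduce to the case just treated. A one-sided bound becomes a two-sided bound via the reflection identity: $f(x)=f(a)-f(a-x)$ turns an upper bound on $f$ into a lower bound and conversely, so $|f|\le B$ on $[0,a]$ for some constant $B$. The move that replaces the usual ``dilate $x$ by $n$'' step---unavailable on a bounded domain---is to contract instead: for $x\in(0,a]$ set $m=\lfloor a/x\rfloor\ge 1$, so that $mx\in[0,a]$ and $f(x)=f(mx)/m$, whence $|f(x)|\le B/\lfloor a/x\rfloor\to 0$ as $x\to 0^{+}$. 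Thus $f$ is continuous at $0$ and the first paragraph finishes the job.

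For the measurable hypothesis (\ref{enu:Cauchy 4}) I would first localise boundedness near $0$. Since $[0,a]=\bigcup_{N\in\mathbb{N}}f^{-1}\big([-N,N]\big)$, some level set $A:=f^{-1}\big([-N,N]\big)$ has positive Lebesgue measure, and Steinhaus' difference-set theorem then gives an interval $(-\delta,\delta)\subseteq A-A$ with $\delta\le a$. For $x\in[0,\delta)$ write $x=a_{1}-a_{2}$ with $a_{1},a_{2}\in A$ and $a_{1}\ge a_{2}$; then $f(a_{1})=f(a_{2})+f(x)$, so $|f(x)|\le 2N$, i.e. $f$ is bounded on $[0,\delta)$. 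The same contraction manoeuvre as above (now fitting $\lfloor\delta/x\rfloor$ copies of $x$ inside $[0,\delta)$) shows $|f(x)|\to 0$ as $x\to 0^{+}$, so $f$ is continuous at $0$ and we are done.

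I do not anticipate a serious obstacle; the only non-elementary input is Steinhaus' theorem in case (\ref{enu:Cauchy 4}), and the one genuinely new manoeuvre relative to the classical real-line arguments is the realisation that on a bounded interval one should pack $\lfloor a/x\rfloor$ disjoint copies of a small $x$ into $[0,a]$ rather than dilating $x$. A fallback, should the direct route turn out awkward, is to normalise so that $f(a)=0$ (subtract the linear part $\tfrac{f(a)}{a}\,x$, which preserves all four hypotheses), extend the resulting function by $a$-periodicity to an additive $F:\mathbb{R}\to\mathbb{R}$---the only delicate point being additivity across the wrap-around $s_{1}+s_{2}\ge a$, which follows from the reflection identity---and then quote the classical Cauchy-equation results of the Introduction for $F$ on $\mathbb{R}$; periodicity of $F$ forces $F\equiv 0$. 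The verifications that are fiddly there (additivity of the extension, and the fact that a one-sided bound transfers only because $F$ is periodic) are exactly what the direct argument avoids.
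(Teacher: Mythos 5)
Your proof is correct, and it takes a genuinely different route from the paper's. The paper proves Case (i) by extending $f$ first to the non-negative half-line via $f_{+}(x)=nf(x/n)$ and then to all of $\mathbb{R}$ by odd reflection, after which it invokes the Ostrowski--Kestelman theorem that an additive function on $\mathbb{R}$ bounded above on a set of positive measure is linear; Case (ii) is reduced to (i) by negating $f$, Case (iii) is proved directly, and Case (iv) is reduced to (iii) via Lusin's theorem. You instead stay entirely inside $[0,a]$ and funnel all four hypotheses through Case (iii): the reflection identity $f(x)+f(a-x)=f(a)$ converts a one-sided bound into a two-sided one, and the contraction $f(x)=f(mx)/m$ with $m=\lfloor a/x\rfloor$ replaces the dilation step that is unavailable on a bounded domain, giving $f(x)\to 0$ as $x\to 0^{+}$; for Case (iv) you use Steinhaus' difference-set theorem where the paper uses Lusin's theorem, both serving only to establish boundedness (respectively continuity) of $f$ near $0$. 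Your route buys self-containedness and uniformity: no extension of the domain, no appeal to the Ostrowski--Kestelman result, and all four cases reduced to one elementary continuity argument; the paper's route buys a direct connection to the classical real-line literature, and the extension $f_{\mathbb{R}}$ it constructs is of some independent interest. The only point to tidy is in Case (iv): when $\delta/x$ is an integer your choice $m=\lfloor\delta/x\rfloor$ gives $mx=\delta\notin[0,\delta)$, so take $m=\lfloor\delta/x\rfloor-1$ or note that $|f(\delta)|\leq 2|f(\delta/2)|\leq 4N$; this is cosmetic and does not affect the argument.
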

Theorem \ref{Thm bounded Cauchy} says that non-linear solutions of
Eq. (\ref{eq:additivethm}) cannot be bounded from below or above,
continuous at zero or Lebesgue measurable. We will now prove the linearity
of $f$ for Case \ref{enu: Cauchy 1}. The proofs for the remaining
cases are given in Appendix~\ref{sec:Appendix}.
\begin{proof}
We will extend $f$ to a finitely additive function on the entire
real line. For any real number $x\in\left[0,a\right]$, Eq. (\ref{eq:additivethm})
implies that 
\begin{equation}
f\left(x\right)=f\left(\frac{n}{n}x\right)=nf\left(\frac{x}{n}\right)\,,\label{eq:rational1}
\end{equation}
where $n$ is a positive integer. If we choose an integer $m\in\mathbb{N}$
with $m/n\in\left[0,a\right]$, then we have 
\begin{equation}
f\left(\frac{m}{n}x\right)=mf\left(\frac{x}{n}\right)=\frac{m}{n}f\left(x\right)\,.\label{eq:rational2}
\end{equation}

In a first step, we extend the function $f$ to all \emph{non-negative}
real numbers by defining 
\begin{equation}
f_{+}(x)=nf\left(\frac{x}{n}\right),
\end{equation}
for real numbers $x>a$ and integers $n>x/a$. This extension is well-defined
since for any two sufficiently large integers, i.e. $m$ and $n$
with $m,n>x/a$, we have 
\begin{equation}
f\left(\frac{x}{mn}\right)=\frac{1}{m}f\left(\frac{x}{n}\right)=\frac{1}{n}f\left(\frac{x}{m}\right),
\end{equation}
according to Eq. (\ref{eq:rational1}), resulting in the identity
\begin{equation}
mf\left(\frac{x}{m}\right)=nf\left(\frac{x}{n}\right).
\end{equation}
Finite additivity on the positive half-line also holds since for any
two non-negative numbers $x,y\geq0$, we find 
\begin{equation}
\begin{aligned}f_{+}(x)+f_{+}(y) & =nf\left(\frac{x}{n}\right)+nf\left(\frac{y}{n}\right)\\
 & =nf\left(\frac{x+y}{n}\right)\\
 & =f_{+}(x+y)\,,
\end{aligned}
\end{equation}
for sufficiently large $n\in\mathbb{N}$ which ensures that $(x+y)/n\in[0,a]$.

In a second step, we extend the function $f_{+}$ to the \emph{entire}
real line by defining
\begin{equation}
f_{\mathbb{R}}(x)=\begin{cases}
f_{+}(x) & x\geq0\,,\\
-f_{+}(-x) & x<0\,.
\end{cases}\label{eq: def of f_R}
\end{equation}
 To show that the function $f_{\mathbb{R}}$ is finitely additive
on all of $\mathbb{R}$, three cases must be considered.

If both $x<0$ and $y<0$, we have 
\begin{equation}
\begin{aligned}f_{\mathbb{R}}(x)+f_{\mathbb{R}}(y) & =-f_{+}(-x)-f_{+}(-y)\\
 & =-f_{+}(-x-y)\\
 & =f_{\mathbb{R}}(x+y)\,,
\end{aligned}
\end{equation}
using that $f_{+}(-x)+f_{+}(-y)=f_{+}(-x-y)$ holds for non-negative
real numbers $-x$ and $-y$.

If $x\geq0$, $y<0$ and $x+y<0$, we have 
\begin{equation}
\begin{aligned}f_{\mathbb{R}}(x)+f_{\mathbb{R}}(y) & =f_{+}(x)-f_{+}(-y-x+x)\\
 & =f_{+}(x)-f_{+}(-y-x)-f_{+}(x)\\
 & =f_{\mathbb{R}}(x+y)\,.
\end{aligned}
\end{equation}

If $x\geq0$, $y<0$ and $x+y\geq0$, we have 
\begin{equation}
\begin{aligned}f_{\mathbb{R}}(x)+f_{\mathbb{R}}(y) & =f_{+}(x+y-y)-f_{+}(-y)\\
 & =f_{+}(x+y)+f_{+}(-y)-f_{+}(-y)\,\\
 & =f_{\mathbb{R}}(x+y)\,.
\end{aligned}
\end{equation}

This property completes the proposed extension of the function $f$
to a finitely additive function $f_{\mathbb{R}}$ on the real line
that is bounded above on the interval $[0,a]$. Ostrowski \cite{ostrowski1929mathematische}
and Kestelman \cite{kestelman1947functional} showed that finitely
additive functions on the real line that are bounded above on a \emph{set
of positive measure} are necessarily linear. Therefore, the extended
function $f_{\mathbb{R}}$ is linear, and its restriction back to
the interval $[0,a]$ is given by $f(x)=f(a)x/a$.
\end{proof}

\section{When Cauchy meets Gleason: additive functions on effect spaces}

The first Gleason-type theorem, published in 2003, assumes additivity
of the frame function not only on projections that occur in the same
\emph{projection-valued measure} (PVM) but on the larger set of effects
that coexist in the same POM.
\begin{thm}[Busch \cite{Busch2003}]
 \label{thm: Busch}Let $\mathcal{E}_{d}$ be the space of effects
on $\C^{d}$ and $\I_{d}$ be the identity operator on $\C^{d}$.
Any function $f:\mathcal{E}_{d}\rightarrow\left[0,1\right]$ satisfying
\begin{equation}
f\left(\I_{d}\right)=1\,,\label{eq:identity}
\end{equation}
and
\begin{equation}
f\left(E_{1}\right)+f\left(E_{2}\right)=f\left(E_{1}+E_{2}\right)\,,\label{eq:additivity}
\end{equation}
for all $E_{1},E_{2}\in\mathcal{E}_{d}$ such that $\left(E_{1}+E_{2}\right)\in\mathcal{E}_{d}$,
admits an expression 
\begin{equation}
f\left(E\right)=\Tr\left(E\rho\right),\label{eq:busch density-2-2-1}
\end{equation}
for some density operator $\rho$, and all effects $E\in\mathcal{E}_{d}$.
\end{thm}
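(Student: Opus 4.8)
The plan is to promote the single additivity hypothesis~(\ref{eq:additivity}) to full $\mathbb{R}$-linearity on the real vector space of Hermitian operators on $\cd$, and then extract $\rho$ by the Riesz representation theorem. The ingredient that makes this work in every dimension, including $d=2$, is that additivity is imposed for \emph{all} coexisting effects (those with $E_1+E_2\in\mathcal{E}_d$), so it already constrains the scalings $tE$ and the sums $t_1E+t_2E$ of a fixed effect $E$; this is precisely the structure governed by Cauchy's functional equation, and boundedness of $f$ in $[0,1]$ is exactly the extra hypothesis supplied by Theorem~\ref{Thm bounded Cauchy}.

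First I would fix a nonzero $E\in\mathcal{E}_d$ and consider $g_E\colon[0,1]\to[0,1]$ given by $g_E(t)=f(tE)$. Whenever $t_1,t_2$ and $t_1+t_2$ lie in $[0,1]$, the operators $t_1E$, $t_2E$ and $(t_1+t_2)E=t_1E+t_2E$ are all effects, so (\ref{eq:additivity}) yields $g_E(t_1)+g_E(t_2)=g_E(t_1+t_2)$; since $g_E$ is bounded above by $1$, the first case of Theorem~\ref{Thm bounded Cauchy} forces $g_E(t)=t\,g_E(1)=t\,f(E)$. Hence $f$ is linear along every ray of $\mathcal{E}_d$. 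I would then extend $f$ to the positive cone by $\hat f(A):=\lambda f(A/\lambda)$ for any $\lambda$ with $A/\lambda\in\mathcal{E}_d$ (ray-wise linearity makes this independent of $\lambda$) and check that $\hat f$ is additive there; next I would extend to an arbitrary Hermitian $A$ by $\hat f(A):=\hat f(A+c\I_d)-c$ for $c$ large enough that $A+c\I_d\ge0$, using $\hat f(\I_d)=f(\I_d)=1$. Absorbing the shifts shows $\hat f$ is well defined and additive on all Hermitian operators, and one then checks it is homogeneous along every ray (with $\hat f(-A)=-\hat f(A)$ from additivity and $\hat f(0)=f(0)=0$), hence $\mathbb{R}$-linear. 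Because the Hermitian operators on $\cd$ form a finite-dimensional real inner-product space under $\langle A,B\rangle=\Tr(AB)$, there is a unique Hermitian $\rho$ with $\hat f(A)=\Tr(A\rho)$, and restricting to effects gives (\ref{eq:busch density-2-2-1}). Finally $\Tr\rho=\hat f(\I_d)=f(\I_d)=1$ by (\ref{eq:identity}), and for any unit vector $\ket\psi$ the projection $\kb\psi\psi$ is an effect, so $\bk\psi{\rho\psi}=f(\kb\psi\psi)\ge0$; hence $\rho\ge0$ and $\rho$ is a density operator.

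The main obstacle I anticipate is the passage from the bounded domain $\mathcal{E}_d$ to the unbounded space of Hermitian operators: one must verify that the two successive extensions — rescaling onto the positive cone, then shifting by multiples of $\I_d$ — are well defined and additive, so that the ray-wise linearity handed over by Theorem~\ref{Thm bounded Cauchy} genuinely upgrades to global $\mathbb{R}$-linearity rather than merely $\mathbb{Q}$-linearity. (Without the boundedness input of Theorem~\ref{Thm bounded Cauchy} one could only conclude $\mathbb{Q}$-homogeneity, which is not enough to pin down $\rho$.) The remaining steps — the Riesz representation and the positivity and normalisation of $\rho$ — are then routine.
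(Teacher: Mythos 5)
Your proposal is correct, and its first half coincides with the paper's: both apply Case~(i) of Theorem~\ref{Thm bounded Cauchy} to the one-dimensional restrictions $t\mapsto f(tE)$ along rays of effects, using boundedness by $1$ to upgrade additivity to genuine $\mathbb{R}$-homogeneity on each ray. Where you diverge is in how ray-wise linearity is promoted to a global linear representation. The paper stays inside the effect space: it decomposes an arbitrary $E$ with non-negative coefficients in an \emph{augmented basis} adapted to $E$ (Lemma~\ref{lem:nonneg decomp}), obtains $f(E)=\mathbf{e}\cdot\mathbf{f}_{\mathcal{B}}$ with an a priori basis-dependent vector, and then removes the basis dependence by comparing against a fixed MIC-POM on the full-dimensional intersection of the two positive cones (Lemma~\ref{Lemma cone intersection span}). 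You instead extend $f$ off the effect space altogether --- first to the cone of positive operators by rescaling, $\hat f(A)=\lambda f(A/\lambda)$, then to all of $\mathbb{H}_d$ by shifting with multiples of $\I_d$ --- verify additivity and homogeneity of the extension (all the needed instances of Eq.~(\ref{eq:additivity}) are available because the rescaled operators coexist), and finish with the Riesz representation on the inner-product space $(\mathbb{H}_d,\Tr(AB))$. Your checks of well-definedness go through: independence of $\lambda$ follows from ray-wise linearity, independence of the shift $c$ from cone additivity together with $\hat f(c\I_d)=c$, and consistency of the two extensions on their overlap is immediate. What each route buys: yours is shorter and dispenses entirely with augmented bases, MIC-POMs and the cone-intersection argument, and it is closer in spirit to Busch's original homogeneity-from-positivity argument (repackaged through Theorem~\ref{Thm bounded Cauchy}); the paper's route never leaves the effect space and makes explicit the frame-vector form $f(E)=\mathbf{e}\cdot\mathbf{c}$ of Eq.~(\ref{eq:finnerproduct}) relative to a fixed informationally complete measurement, which is the structure the authors want to exhibit. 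Your concluding steps (unit trace from Eq.~(\ref{eq:identity}), positivity from $f(\kb{\psi}{\psi})\geq0$) match the paper's identification of $\rho$ as a density operator.
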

Theorem \ref{thm: Busch} rephrases the (finite-dimensional case of
the) theorem proved by Busch \cite{Busch2003} and the theorem due
to Caves et al. \cite{Caves2004}. Busch uses the positivity of the
frame function $f$ to directly establish its homogeneity whereas
Caves et al. derive homogeneity by showing that the frame function
$f$ must be continuous at the zero operator. These arguments seem
to run in parallel with Cases \ref{enu:Cauchy 2} and \ref{enu:Cauchy 3}
of Theorem \ref{Thm bounded Cauchy} presented in the previous section.
In Sec. \ref{subsec:Alternative-proof}, we will give an alternative
proof of Theorem \ref{thm: Busch} which can be based on any of the
four cases of Theorem \ref{Thm bounded Cauchy}.

\subsection{Preliminaries}

To begin, let us introduce a number of useful concepts and establish
a suitable notation. Throughout this section we will make use of the
fact that the Hermitian operators on $\mathbb{C}^{d}$ constitute
a real vector space of dimension $d^{2}$, which we will denote by
$\mathbb{H}_{d}$. We may therefore employ the standard inner product
$\left\langle A,B\right\rangle =\Tr\left(AB\right)$, for Hermitian
operators $A$ and $B$, in our reasoning as well as the norm $\left\Vert \cdot\right\Vert $
which it induces.

A discrete POM on $\C^{d}$ is described by its range, i.e. by a sequence
of effects $\left\llbracket E_{1},E_{2},\dots\right\rrbracket $ that
sum to the identity operator on $\C^{d}$. A \emph{minimal informationally-com-plete}
(MIC) POM $\mathcal{M}$ on $\C^{d}$ consists of exactly $d^{2}$
linearly independent effects, ${\cal M}=\left\llbracket M_{1},\ldots,M_{d^{2}}\right\rrbracket $.
Hence, MIC-POMs constitute bases of the vector space of Hermitian
operators, and it is known that they exist in all finite dimensions
\cite{caves2002unknown}.

Positive linear combinations of effects will play an important role
below, giving rise to the following definition.
\begin{defn}
The \emph{positive cone} of a set of Hermitian operators $S$ on $\C^{d}$
is the set of non-negative linear combinations of the elements of
$S$, i.e. the set
\begin{equation}
\mathcal{C}\left(S\right)=\left\{ H=\sum_{j=1}^{d^{2}}a_{j}H_{j}\,,\,a_{j}\geq0\text{ for any }H_{j}\in S\right\} .
\end{equation}
Note that the expression of an element of $\mathcal{C}\left(S\right)$
as a linear combination of elements of $S$ requires at most $d^{2}$
terms as a consequence of Caratheodory's theorem.
\end{defn}
Next, we introduce so-called ``augmented'' bases of the space $\mathbb{H}_{d}$
which are built around sets of $d$ projections $\left\{ \ket{e_{1}}\bra{e_{1}},\ldots,\ket{e_{d}}\bra{e_{d}}\right\} $
where the vectors $\left\{ \ket{e_{1}},\ldots\ket{e_{d}}\right\} $
form an orthonormal basis of $\C^{d}$.
\begin{defn}
\label{def:An-augmented-basis}An \emph{augmented basis }of the Hermitian
operators on $\cd$ is a set of $d^{2}$ linearly independent rank-one
effects $\mathcal{B}=\left\{ B_{1,}\ldots,B_{d^{2}}\right\} $ satisfying
\begin{enumerate}
\item $B_{j}=c\ket{e_{j}}\bra{e_{j}}$ for $1\leq j\leq d$, with $0<c<1$
and an orthonormal basis $\left\{ \ket{e_{1}},\ldots\ket{e_{d}}\right\} $
of $\C^{d}$;
\item $\sum_{j=1}^{d^{2}}B_{j}\in\mathcal{E}_{d}\,$.
\end{enumerate}
\end{defn}
Given any orthonormal basis $\left\{ \ket{e_{1}},\ldots,\ket{e_{d}}\right\} $
of $\C^{d}$, we can construct an augmented basis for the space of
operators acting on it. First, complete the $d$ projectors 
\begin{equation}
\Pi_{j}=\ket{e_{j}}\bra{e_{j}}\,,\qquad j=1\ldots d\,,\label{eq:cfs1}
\end{equation}
into a basis $\left\{ \Pi_{1},\ldots,\Pi_{d^{2}}\right\} $ of the
Hermitian operators on $\C^{d}$, by adding $d(d-1)$ further rank-one
projections; this is always possible \cite{caves2002unknown}. The
sum 
\begin{equation}
G=\sum_{j=1}^{d^{2}}\Pi_{j}\,,
\end{equation}
is necessarily a positive operator. The relation $\Tr G=d^{2}$ implies
that G must have at least one eigenvalue larger than $1$. If $\Gamma>1$
is the largest eigenvalue of $G$, then $G/\Gamma$ is an effect since
it is a positive operator with eigenvalues less than or equal to one.
Defining 
\begin{equation}
B_{j}=\Pi_{j}/\Gamma\,,\qquad j=1\ldots d^{2}\,,
\end{equation}
the set $\mathcal{B}=\left\{ B_{1},\ldots,B_{d^{2}}\right\} $ turns
into an augmented basis. One can show that $\mathcal{B}$ can never
correspond to a POM. Nevertheless, the effects $B_{j}$ \emph{coexist}
in the sense that they can occur in one single POM, for example $\left\llbracket B_{1},\ldots,B_{d^{2}},\I-G/\Gamma\right\rrbracket $.

Given an effect, one can always represent it as a positive linear
combination of elements in a suitable augmented basis.
\begin{lem}
\label{lem:nonneg decomp}For any effect $E\in\mathcal{E}_{d}$ there
exists an augmented basis $\mathcal{B}$ such that $E$ is in the
positive cone of $\mathcal{B}$.
\end{lem}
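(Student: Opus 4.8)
The plan is to start from an arbitrary effect $E\in\mathcal{E}_d$ and build an augmented basis around the eigenbasis of $E$ itself. First I would diagonalise $E$, writing $E=\sum_{j=1}^d \lambda_j\ket{e_j}\bra{e_j}$ with $0\le\lambda_j\le 1$ and $\{\ket{e_j}\}$ an orthonormal basis of $\C^d$. The first $d$ members of the augmented basis will be the scaled projectors $B_j=c\ket{e_j}\bra{e_j}$ onto these eigenvectors, for a constant $0<c<1$ to be fixed later; note that $E=\sum_{j=1}^d (\lambda_j/c)\,B_j$, so as long as $c$ can be chosen with $c\ge\lambda_j$ for all $j$ (i.e.\ $c$ close to $1$ if $E$ has an eigenvalue close to $1$) the coefficients $\lambda_j/c$ are non-negative and $E$ already lies in the positive span of $B_1,\dots,B_d$.

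The remaining task is to adjoin $d(d-1)$ further rank-one effects $B_{d+1},\dots,B_{d^2}$ so that (a) the full set $\{B_1,\dots,B_{d^2}\}$ is linearly independent and (b) condition (ii) of Definition \ref{def:An-augmented-basis} holds, i.e.\ $\sum_{j=1}^{d^2}B_j\in\mathcal{E}_d$. For (a) I would invoke the cited fact \cite{caves2002unknown} that the $d$ projectors $\ket{e_j}\bra{e_j}$ can be completed to a basis $\{\Pi_1,\dots,\Pi_{d^2}\}$ of $\mathbb{H}_d$ consisting of rank-one projections; linear independence is unaffected by the positive rescalings. For (b), observe that $\sum_{j=1}^{d^2}B_j = c\sum_{j=1}^d\ket{e_j}\bra{e_j} + \varepsilon\sum_{j=d+1}^{d^2}\Pi_j$ once we take $B_j=\varepsilon\Pi_j$ for $j>d$; the first term is $c\,\I_d$ and the second is a fixed positive operator scaled by $\varepsilon$, so the whole sum has operator norm at most $c+\varepsilon\,\Vert\sum_{j>d}\Pi_j\Vert$, which is $\le 1$ provided $\varepsilon$ is chosen small enough relative to $1-c$. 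Thus for any $c\in(\max_j\lambda_j,\,1)$ and then any sufficiently small $\varepsilon>0$, the set $\mathcal{B}=\{c\ket{e_1}\bra{e_1},\dots,c\ket{e_d}\bra{e_d},\varepsilon\Pi_{d+1},\dots,\varepsilon\Pi_{d^2}\}$ is an augmented basis containing $E$ in its positive cone.

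The main obstacle is the simultaneous satisfaction of the two constraints on the scale factors: we need $c$ large enough to dominate the spectrum of $E$ (so $E$ has non-negative coordinates) yet small enough, together with $\varepsilon$, to keep $\sum_j B_j$ below the identity. The delicate case is when $E$ has an eigenvalue equal to, or very close to, $1$, forcing $c$ near $1$ and leaving essentially no room for the completing operators; this is handled by noting that $\varepsilon$ may be taken arbitrarily small and $\Vert\sum_{j>d}\Pi_j\Vert$ is a finite constant depending only on the chosen completion, so $c+\varepsilon\Vert\sum_{j>d}\Pi_j\Vert<1$ remains achievable for every $c<1$. One should also check the strict inequality $0<c<1$ required by Definition \ref{def:An-augmented-basis}(i): if $E$ itself has an eigenvalue exactly $1$ we simply pick $c$ strictly between that value and $1$ — but an effect can have eigenvalue $1$ only if $\max_j\lambda_j=1$, in which case we instead pick $c$ in $(1-\delta,1)$ and absorb the (now possibly slightly larger than one) coefficient by noting $\lambda_j/c\le 1<\infty$ still gives a valid non-negative decomposition; the coefficients need not be bounded by $1$, only non-negative.
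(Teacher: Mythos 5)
Your proposal is correct and follows essentially the same route as the paper: diagonalise $E$, take the first $d$ elements of the augmented basis to be $c\ket{e_j}\bra{e_j}$ on the eigenbasis of $E$, and complete to a basis of rank-one effects whose sum stays below the identity. The only remark worth making is that the constraint $c>\max_{j}\lambda_{j}$ that drives most of your case analysis is unnecessary --- as you note only in your final sentence, membership in the positive cone requires merely non-negative coefficients, so $e_{j}=\lambda_{j}/c\geq0$ for \emph{any} $c\in\left(0,1\right)$, and the delicate discussion of eigenvalues near $1$ can be dropped entirely (the paper simply takes an arbitrary $c\in\left(0,1\right)$).
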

\begin{proof}
By the spectral theorem we may write
\begin{equation}
E=\sum_{j=1}^{d}\lambda_{j}\ket{e_{j}}\bra{e_{j}}\,,\qquad\lambda_{j}\in\left[0,1\right]\,,
\end{equation}
for an orthonormal basis $\left\{ \ket{e_{j}},1\leq j\leq d\right\} $
of $\C^{d}$. Take $\mathcal{B}$ to be an augmented basis with 
\begin{equation}
B_{j}=c\ket{e_{j}}\bra{e_{j}}\,,
\end{equation}
for $1\leq j\leq d$ and some $c\in\left(0,1\right)$. Then we may
express $E$ as the linear combination 
\begin{equation}
E=\sum_{j=1}^{d^{2}}e_{j}B_{j}\,,\label{eq: decomp of effect in cone C(B)}
\end{equation}
with non-negative coefficients
\begin{equation}
e_{j}=\begin{cases}
\frac{1}{c}\lambda_{j} & j=1\ldots d\,,\\
0 & j=(d+1)\ldots d^{2}\,,
\end{cases}
\end{equation}
showing that the positive cone of the basis ${\cal B}$ indeed contains
the effect $E$.
\end{proof}
Finally, we need to establish that the intersection of the positive
cones associated with an augmented basis and a MIC-POM, respectively,
has dimension $d^{2}$.
\begin{figure}
\centering{}\includegraphics[scale=0.9]{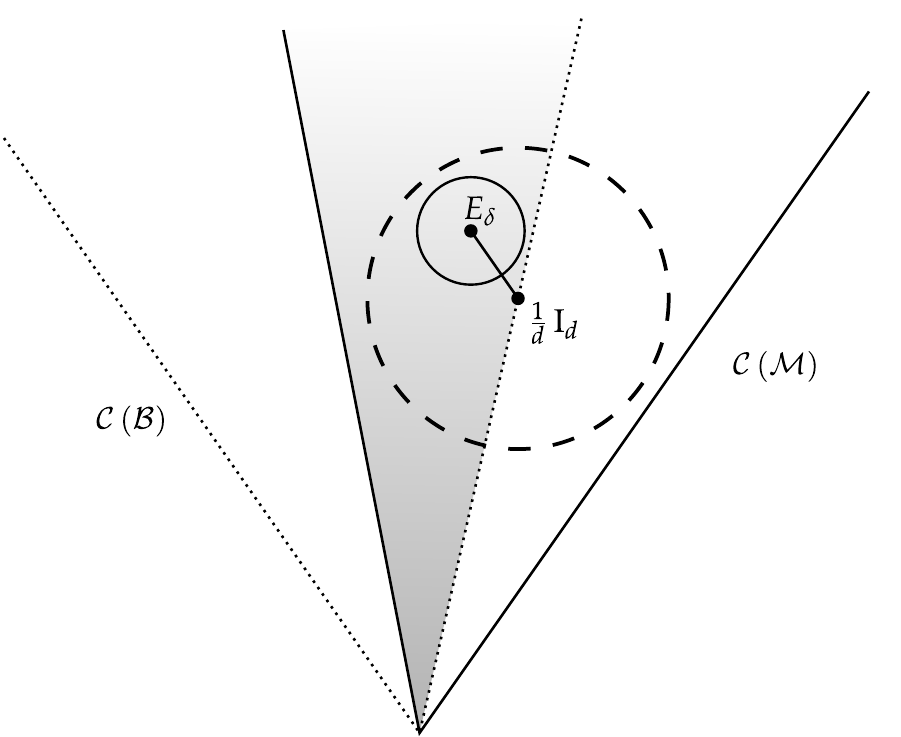}\caption{Sketch of the construction of the open ball $\mathfrak{B}_{\gamma}\left(E_{\delta}\right)$
of dimension $d^{2}$; the positive cones $\mathcal{C}\left(\mathcal{M}\right)$
(solid border) and $\mathcal{C}\left({\cal B}\right)$ (dotted border)
intersect in the cone ${\cal C}({\cal B})\cap{\cal C}({\cal M})$
(shaded cone); the intersection entirely contains the $d^{2}$-dimensional
ball $\mathfrak{B}_{\gamma}\left(E_{\delta}\right)$ around $E_{\delta}$
(solid circle) sitting inside the ball $\mathfrak{B}_{\varepsilon}\left(\I_{d}/d\right)$
of radius $\varepsilon$ around $\I_{d}/d$ (dashed circle); the distance
between $E_{\delta}$ and $\I_{d}/d$ (solid line) is given in Eq.
(\ref{eq: distance between E_delta and I/d}) .\label{fig:cones}}
\end{figure}

\begin{lem}
\label{Lemma cone intersection span}Let $\mathcal{B}=\left\{ B_{1},\ldots,B_{d^{2}}\right\} $
be an augmented basis and $\mathcal{M}=\left\llbracket M_{1},\ldots,M_{d^{2}}\right\rrbracket $
a MIC-POM on $\C^{d}$. The effects in the intersection ${\cal C}({\cal B})\cap{\cal C}({\cal M})$
of the positive cones of $\mathcal{B}$ and $\mathcal{M}$ span the
real vector space $\mathbb{H}_{d}$ of Hermitian operators on $\C^{d}$.
\end{lem}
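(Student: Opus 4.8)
The plan is to exhibit a full-dimensional open ball of effects contained in $\mathcal{C}(\mathcal{B})\cap\mathcal{C}(\mathcal{M})$. Since $\mathbb{H}_d$ is $d^2$-dimensional and an open subset of $\mathbb{H}_d$ cannot be contained in any proper subspace, such a ball automatically spans $\mathbb{H}_d$, which is all the lemma asks for. Two facts drive the argument. First, a positive cone generated by a \emph{basis} of $\mathbb{H}_d$ is full-dimensional: the linear isomorphism $(a_1,\dots,a_{d^2})\mapsto\sum_j a_j X_j$ sends the open positive orthant of $\mathbb{R}^{d^2}$ onto the interior of $\mathcal{C}(\{X_j\})$, and both $\mathcal{B}$ and $\mathcal{M}$ are bases of $\mathbb{H}_d$ by definition. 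Second, the two cones share a common interior point which is also an effect; this point will sit close to $\I_d/d$.

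First I would place a ball around $\I_d/d$ inside $\mathcal{C}(\mathcal{M})$. Because $\mathcal{M}$ is a MIC-POM, its elements sum to $\I_d$, so $\I_d/d=\sum_j(1/d)M_j$ has strictly positive coefficients and is therefore an interior point of $\mathcal{C}(\mathcal{M})$; moreover $\I_d/d$ has spectrum $\{1/d\}\subset(0,1)$, hence is also an interior point of $\mathcal{E}_d$. So there is an $\varepsilon>0$ with $\mathfrak{B}_\varepsilon(\I_d/d)\subseteq\mathcal{C}(\mathcal{M})\cap\mathcal{E}_d$.

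The point requiring care is that $\I_d/d$ is in general \emph{not} an interior point of $\mathcal{C}(\mathcal{B})$: since $B_j=c\ket{e_j}\bra{e_j}$ for $1\le j\le d$, we can only write $\I_d/d=\tfrac{1}{cd}\sum_{j=1}^{d}B_j$, which has vanishing coefficients on $B_{d+1},\dots,B_{d^2}$ and thus lies on the boundary of $\mathcal{C}(\mathcal{B})$. I would therefore push slightly into the interior: fix any interior point of $\mathcal{C}(\mathcal{B})$, for instance the centroid $\bar B=\tfrac{1}{d^2}\sum_{j=1}^{d^2}B_j$, and set $E_\delta=(1-\delta)\tfrac{\I_d}{d}+\delta\bar B$. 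Expanding $E_\delta$ in the basis $\mathcal{B}$ shows that for $\delta\in(0,1)$ every one of its $d^2$ coefficients is strictly positive, so $E_\delta$ lies in the interior of $\mathcal{C}(\mathcal{B})$; and since $\|E_\delta-\I_d/d\|=\delta\,\|\bar B-\I_d/d\|$, choosing $\delta$ small enough puts $E_\delta$ inside $\mathfrak{B}_\varepsilon(\I_d/d)$, hence inside $\mathcal{C}(\mathcal{M})$ and $\mathcal{E}_d$ as well.

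Thus $E_\delta$ lies in the interior of $\mathcal{C}(\mathcal{B})$, in the open set $\mathfrak{B}_\varepsilon(\I_d/d)\subseteq\mathcal{C}(\mathcal{M})$, and in the interior of $\mathcal{E}_d$, all three being subsets of $\mathbb{H}_d$; so there is a radius $\gamma>0$ with $\mathfrak{B}_\gamma(E_\delta)\subseteq\mathcal{C}(\mathcal{B})\cap\mathcal{C}(\mathcal{M})\cap\mathcal{E}_d$. This open $d^2$-dimensional ball consists entirely of effects lying in $\mathcal{C}(\mathcal{B})\cap\mathcal{C}(\mathcal{M})$, and it spans $\mathbb{H}_d$, completing the proof. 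I do not expect a genuine obstacle here: the only things needing attention are getting the order of choices right---pick $\delta$ small relative to $\varepsilon$, then $\gamma$ small relative to $\delta$---and recognising at the outset that the augmented basis forces the construction off the obvious candidate point $\I_d/d$ and onto the nearby perturbed effect $E_\delta$.
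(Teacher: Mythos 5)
Your proof is correct and follows essentially the same route as the paper: identify $\I_d/d$ as interior to $\mathcal{C}(\mathcal{M})$ but boundary for $\mathcal{C}(\mathcal{B})$, perturb to a common interior point $E_\delta$, and extract a full-dimensional ball in the intersection. The only differences are cosmetic --- the paper perturbs by adding $\delta\sum_{j=d+1}^{d^2}B_j$ rather than mixing with the centroid, and you are slightly more careful in explicitly keeping the ball inside $\mathcal{E}_d$ so that its elements are genuinely effects.
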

\begin{proof}
Since the effects in a POM sum to the identity, we have
\begin{equation}
\frac{1}{d}\I_{d}=\sum_{j=1}^{d^{2}}\frac{1}{d}M_{j}\,.\label{eq:uniqueidentity}
\end{equation}
With each of the coefficients in the unique decomposition on the right-hand
side being finite and positive (as opposed to non-negative), the effect
$\I_{d}/d$ is seen to be an \emph{interior} point of the positive
cone $\mathcal{C}\left(\mathcal{M}\right)$. At the same time, the
effect $\I_{d}/d$ is located on the \emph{boundary} of the cone ${\cal C}({\cal B})$
since its expansion in an augmented basis has only $d$ non-zero terms.
Let us define the operator 
\begin{equation}
E_{\delta}=\frac{1}{d}\I_{d}+\delta\sum_{j=d+1}^{d^{2}}B_{j}=\frac{1}{cd}\sum_{j=1}^{d}B_{j}+\delta\sum_{j=d+1}^{d^{2}}B_{j}\,,
\end{equation}
which, for any positive $\delta>0$, is an \emph{interior} point of
the cone $\mathcal{C}\left(\mathcal{B}\right)$: each of the positive
coefficients in its unique decomposition in terms of the augmented
basis ${\cal B}$ is non-zero; we have used Property 1 of Def. \ref{def:An-augmented-basis}
to express the identity $\I_{d}$ in terms of the basis ${\cal B}$.
For sufficiently small values of $\delta$, the operator $E_{\delta}$
is also an interior point of the open ball $\mathfrak{B}_{\varepsilon}\left(\I_{d}/d\right)$
with radius $\varepsilon$ about the point $I_{d}/d$ since
\begin{equation}
\left\Vert E_{\delta}-\frac{1}{d}\I_{d}\right\Vert =\delta\left\Vert \sum_{j=d+1}^{d^{2}}B_{j}\right\Vert <\varepsilon\label{eq: distance between E_delta and I/d}
\end{equation}
holds whenever 
\begin{equation}
0<\delta<\varepsilon\left\Vert \sum_{j=d+1}^{d^{2}}B_{j}\right\Vert ^{-1}\:.
\end{equation}
Being an interior point of both the positive cones $\mathcal{C}\left(\mathcal{B}\right)$
and $\mathcal{C}\left(\mathcal{M}\right)$, the operator $E_{\delta}$
is at the center of an open ball $\mathfrak{B}_{\gamma}\left(E_{\delta}\right)$,
located entirely in the intersection ${\cal C}({\cal B})\cap{\cal C}({\cal M})$
(cf. Fig. \ref{fig:cones}). Since the ball $\mathfrak{B}_{\gamma}\left(E_{\delta}\right)$
has dimension $d^{2},$ the effects contained in it must indeed span
the real vector space $\mathbb{H}_{d}$ of Hermitian operators.
\end{proof}
Combining Theorem \ref{Thm bounded Cauchy} with Lemmata \ref{lem:nonneg decomp}
and \ref{Lemma cone intersection span} will allow us to present a
new proof of Busch's Gleason-type theorem.

\subsection{An alternative proof of Busch's Gleason-type theorem\label{subsec:Alternative-proof}}

Recalling that the trace of the product of two Hermitian operators
constitutes an inner product on the vector space of Hermitian operators,
Theorem \ref{thm: Busch} essentially states that the frame function
$f$ acting on an effect can be written as the inner product of that
effect with a fixed density operator. To underline the connection
with the inner product we adopt the following notation. Let $\mathcal{A}=\left\{ A_{1},\ldots,A_{d^{2}}\right\} $
be a basis for the Hermitian operators $\mathbb{H}_{d}$ on $\C^{d}$.
We describe the effect $E$ by the ``effect vector'' $\mathbf{e}=\left(e_{1},\ldots,e_{d^{2}}\right)^{T}\in\R^{d^{2}}$,
given by its expansion coefficients in this basis,
\begin{equation}
E=\sum_{j=1}^{d^{2}}e_{j}A_{j}\equiv\mathbf{e}\cdot\mathbf{A}\,,\label{eq: expansion of effects in A}
\end{equation}
where $\mathbf{A}$ is an operator-valued vector with $d^{2}$ components.
Theorem \ref{thm: Busch} now states that the frame function is given
by a scalar product, 
\begin{equation}
f\left(E\right)=\mathbf{e}\cdot\mathbf{c}\,,\label{eq:finnerproduct}
\end{equation}
between the effect vector $\mathbf{e}$ and a \emph{fixed} vector
$\mathbf{c}\in\R^{d^{2}}$. Let us determine the relation between
the density matrix $\rho$ in (\ref{eq:busch density-2-2-1}) in the
theorem and the vector $\mathbf{c}$ in (\ref{eq:finnerproduct}).
Consider any orthonormal basis $\mathcal{W}=\left\{ W_{1},\dots,W_{d^{2}}\right\} $
of the Hermitian operators on $\C^{d}$ and let $\mathbf{e}^{\prime}\in\R^{d^{2}}$
be the vector such that $E=\mathbf{e^{\prime}}\cdot\mathbf{W}$. Then
we may write
\begin{align}
f\left(E\right) & =\mathbf{e}\cdot\mathbf{c}=\mathbf{e}^{\prime}\cdot\mathbf{c}^{\prime}=\Tr\left(\sum_{j=1}^{d^{2}}e_{j}^{\prime}W_{j}\sum_{k=1}^{d^{2}}c_{k}^{\prime}W_{k}\right)\nonumber \\
 & =\Tr\left(E\sum_{j=1}^{d^{2}}c_{j}^{\prime}W_{j}\right)\,;\label{eq: expansion of f(E)}
\end{align}
here $\mathbf{c^{\prime}}\in\R^{d^{2}}$ is a fixed vector given by
$\mathbf{c}^{\prime}=C^{-T}\mathbf{c}$ and $C^{-T}$ is the inverse
transpose of the change-of-basis matrix $C$ between the bases $\mathcal{B}$
and $\mathcal{W}$, i.e. the matrix satisfying $C\mathbf{h}=\mathbf{h}^{\prime}$
for all Hermitian operators $H=\mathbf{h}\cdot\mathbf{B}=\mathbf{h^{\prime}}\cdot\mathbf{W}$.
By the definition of a frame function the operator 
\begin{equation}
\rho\equiv\sum_{j=1}^{d^{2}}c_{j}^{\prime}W_{j}=\sum_{j=1}^{d^{2}}\left(C^{-T}\right)_{jk}c_{k}W_{j}
\end{equation}
must be positive semi-definite (since $f$ is positive) and have unit
trace (due to Eq. (\ref{eq:identity})) i.e. be a density operator.

We will now prove that a frame function always admits an expression
as in Eq. (\ref{eq:finnerproduct}).
\begin{proof}
By Lemma \ref{lem:nonneg decomp}, there exists an augmented basis
$\mathcal{B}=\left\{ B_{1},\ldots,B_{d^{2}}\right\} $ for any $E\in\mathcal{E}_{d}$
such that 
\begin{equation}
E=\mathbf{e}\cdot\mathbf{B}\equiv\sum_{j=1}^{d^{2}}e_{j}B_{j}\,,\label{eq: expansion of effects in B}
\end{equation}
with coefficients $e_{j}\geq0$, as in Eq. (\ref{eq: expansion of effects in A}).

For each value $j\in\left\{ 1,\dots,d^{2}\right\} $, we write the
restriction of the frame function $f$ to the set of effects of the
form $xB_{j}$, for $x\in\mathbb{R}$, as
\begin{equation}
f\left(xB_{j}\right)=F_{j}\left(x\right)\,,\label{eq:def Fj}
\end{equation}
where $F_{j}:\left[0,a_{j}\right]\rightarrow\left[0,1\right]$ and
$a_{j}=\max\left\{ x|xB_{j}\in\mathcal{E}_{d}\right\} $. By Eq. (\ref{eq:additivity})
we have that $F_{j}$ satisfies Cauchy's functional equation, i.e.
$F_{j}\left(x+y\right)=F_{j}\left(x\right)+F_{j}\left(y\right)$.
Due to the assumption in Theorem \ref{thm: Busch} that $f:\mathcal{E}_{d}\rightarrow\left[0,1\right]$,
each $F_{j}$ must satisfy Condition \ref{enu: Cauchy 1} of Theorem
\ref{Thm bounded Cauchy} which implies
\begin{equation}
f\left(xB_{j}\right)=F_{j}\left(x\right)=F_{j}\left(1\right)x=f\left(B_{j}\right)x\,.\label{eq:Fj is linear}
\end{equation}
Thus we find
\begin{equation}
f\left(E\right)=\sum_{j=1}^{d^{2}}f\left(e_{j}B_{j}\right)=\sum_{j=1}^{d^{2}}e_{j}f\left(B_{j}\right)=\mathbf{e}\cdot\mathbf{f}_{\mathcal{B}}\,,\label{eq:E in basis B}
\end{equation}
where the $j$-th component of $\mathbf{f}_{\mathcal{B}}\in\R^{d^{2}}$
is given by $f\left(B_{j}\right)$, by repeatedly using additivity
and Eq. (\ref{eq:Fj is linear}). Note that Eq. (\ref{eq:E in basis B})
is not yet in the desired form of Eq. (\ref{eq:finnerproduct}) since
the vector $\mathbf{f}_{\mathcal{B}}$ depends on the basis $\mathcal{B}$
and thus the effect $E$.

Let $\mathcal{M}=\left\llbracket M_{1},\ldots,M_{d^{2}}\right\rrbracket $
be a MIC-POM on $\C^{d}$. Since the elements of $\mathcal{M}$ are
a basis for the space $\mathbb{H}_{d}$, the Hermitian operators on
$\C^{d}$, we have for any $E\in\mathcal{E}_{d}$ 
\begin{equation}
E=\mathbf{e^{\prime\prime}}\cdot\mathbf{M}\,,
\end{equation}
for coefficients $e_{j}^{\prime\prime}\in\mathbb{R}$ some of which
may be negative. There exists a fixed change-of-basis matrix $D$
such that 
\begin{equation}
D\mathbf{e}=\mathbf{e^{\prime\prime}}\,,
\end{equation}
for all effects $E\in\mathcal{E}_{d}$. Now we have 
\begin{equation}
\begin{aligned}f\left(E\right) & =\mathbf{e}\cdot\mathbf{f}_{\mathcal{B}}\\
 & =\left(D\mathbf{e}\right)\cdot\left(D^{-T}\mathbf{f}_{\mathcal{B}}\right)\\
 & =\mathbf{e^{\prime\prime}}\cdot\left(D^{-T}\mathbf{f}_{\mathcal{B}}\right).
\end{aligned}
\label{eq:arbitrary effect fixed basis}
\end{equation}
Any effect $G$ in the intersection of the positive cones $\mathcal{C}\left(\mathcal{B}\right)$
and $\mathcal{C}\left(\mathcal{M}\right)$ can be expressed in two
ways, 
\begin{equation}
G=\mathbf{g}\cdot\mathbf{B}=\mathbf{g}^{\prime\prime}\cdot\mathbf{M}\,,
\end{equation}
where both effect vectors $\mathbf{g}$ and $\mathbf{g}^{\prime\prime}$
have only non-negative components. Eqs. (\ref{eq:E in basis B}) and
(\ref{eq:arbitrary effect fixed basis}) imply that 
\begin{equation}
\begin{aligned}\mathbf{g^{\prime\prime}}\cdot\mathbf{f}_{\mathcal{M}} & =f\left(G\right)=\mathbf{g}^{\prime\prime}\cdot\left(D^{-T}\mathbf{f}_{\mathcal{B}}\right)\,.\end{aligned}
\end{equation}
Since by Lemma \ref{Lemma cone intersection span} there are $d^{2}$
linearly independent effects $G$ in the intersection $\mathcal{C}\left(\mathcal{M}\right)\cap\mathcal{C}\left(\mathcal{B}\right)$,
we conclude that
\begin{equation}
D^{-T}\mathbf{f}_{\mathcal{B}}=\mathbf{f}_{\mathcal{M}}\,.
\end{equation}
Combining this equality with Equation (\ref{eq:arbitrary effect fixed basis})
we find, for a fixed MIC-POM $\mathcal{M}=\left\llbracket M_{1},\ldots,M_{d^{2}}\right\rrbracket $
and any effect $E\in\mathcal{\mathcal{E}}_{\C^{d}}$, that the frame
function $f$ takes the form
\begin{equation}
f\left(E\right)=\mathbf{e}^{\prime\prime}\cdot\mathbf{f}_{\mathcal{M}}\,.
\end{equation}
Here $\mathbf{f}_{\mathcal{M}}\equiv\mathbf{c}$ is a \emph{fixed}
vector since it does not depend on $E$.
\end{proof}
Note that Eq. (\ref{eq:Fj is linear}) may also be found using the
other three cases of Theorem \ref{Thm bounded Cauchy}. For Case \ref{enu:Cauchy 2},
we observe that each of the functions $F_{j},j=1\ldots d^{2}$, is
\emph{non-negative} by definition. Alternatively, each function $F_{j}$
can be shown to be \emph{continuous} at zero (Case \ref{enu:Cauchy 3})
using the following argument which is similar to the one given in
\cite{caves2002unknown}. Assume $F_{j}$ is not continuous at zero.
Then there exists a number $\varepsilon>0$ such that for all $\delta>0$
we have
\begin{equation}
F_{j}\left(x_{0}\right)>\varepsilon\,,
\end{equation}
for some $0<x_{0}<\delta<1$. For any given $\varepsilon$ choose
$\delta=1/n<\varepsilon$, there is a value of $x_{0}<\delta$ such
that $F_{j}\left(x_{0}\right)>\varepsilon$. However, we have the
inequality $nx_{0}<1$, which leads to
\begin{equation}
F_{j}\left(nx_{0}\right)=nF_{j}\left(x_{0}\right)>n\varepsilon>1\,,
\end{equation}
 contradicting the the existence of an upper bound of one on values
of $F_{j}$. Finally, each of the functions $F_{j}$ is \emph{Lebesgue
measurable} (Case \ref{enu:Cauchy 4}) which follows from the monotonicity
of the function.

\section{Summary and discussion\label{sec:Summary-and-discussion}}

We are aware of two papers linking Gleason's theorem and Cauchy's
functional equation. Cooke et al. \cite{cooke_keane_moran_1985} used
Cauchy's functional equation to demonstrate the necessity of the boundedness
of frame functions in proving Gleason's theorem. Dvu\-re\-\v{c}enskij
\cite{Dvurecenskij1996} introduced frame functions defined on effect
algebras but did not proceed to derive a Gleason-type theorem in the
context of quantum theory.

In this paper, we have exploited the fact that additive functions
are central to both Gleason-type theorems and Cauchy's functional
equation. Gleason-type theorems are based on the assumption that states
assign probabilities to measurement outcomes via additive functions,
or \emph{frame functions,} on the effect space. Linearity of the frame
functions\emph{ }has been shown to follow from positivity and other
assumptions which are well-known in the context of Cauchy's functional
equation. Altogether, the result obtained here amounts to an alternative
proof of the extension of Gleason's theorem to dimension two given
by Busch \cite{Busch2003} and Caves et al. \cite{caves2002unknown}.

Other\emph{ }Gleason-type theorems are known that are \emph{stronger,
}in the sense that they depend on assumptions \emph{weaker} than those
of Theorem \ref{thm: Busch}. The smallest known set of assumptions
requires Eq. (\ref{eq:additivity}) to only be valid for effects $E_{1}$
and $E_{2}$ that coexist in a \emph{projective-simulable} POM \cite{Oszmaniec2017},
i.e. a POM that may be simulated using only classic mixtures of projective
measurements, as opposed to any POM. Since the proof given in \cite{Wright2018}
relies on Theorem \ref{thm: Busch}, the alternative proof presented
in Sec. \ref{subsec:Alternative-proof} also gives rise to a new proof
of the strongest existing Gleason-type theorem.

We have not been able to exploit the structural similarity between
the requirements on frame functions and on the solutions of Cauchy's
functional equation in order to yield a new proof of Gleason's original
theorem. Additivity of frame functions defined on projections instead
of effects does not provide us with the type of continuous parameters
that are necessary for the argument developed here. It remains an
intriguing open question whether such a proof does exist.
\begin{acknowledgement*}
\end{acknowledgement*}
The authors thank Jonathan Barrett for pointing out a gap in the proof
of Theorem 1 given in an earlier version of this paper. VJW gratefully
acknowledges funding from the York Centre for Quantum Technologies
and the WW Smith fund.

\appendix

\section{Proofs of Cases (ii), (iii) and (iv) of Theorem \ref{Thm bounded Cauchy}\label{sec:Appendix}}

It is shown that each of the conditions given in Cases (ii) to (iv)
imply Theorem \ref{Thm bounded Cauchy} which states that an additive
function on a particular interval must be linear.
\begin{proof}
Case \ref{enu:Cauchy 2}: Suppose that there exists a \emph{non-linear}
function $f$ satisfying Eq. (\ref{eq:additivethm}) and Case \ref{enu:Cauchy 2}
of Theorem \ref{Thm bounded Cauchy}. Then the function $g:\left[0,a\right]\rightarrow\mathbb{R}$
defined by $g\left(x\right)=-f\left(x\right)$ is non-linear but satisfies
Eq. (\ref{eq:additivethm}) and $g\left(x\right)\leq b$ and $b\geq0$,
with $b=-c$, contradicting Case \ref{enu: Cauchy 1}.
\end{proof}
\begin{proof}
Case \ref{enu:Cauchy 3}: Since $f$ is continuous at zero and $f\left(0\right)=0$,
as follows from Eq. (\ref{eq:additivethm}), we have that for any
$\varepsilon>0$, there exists a $\delta>0$ such that $\left|f\left(x\right)\right|<\varepsilon$
for all $x$ satisfying $\left|x\right|<\delta$. Let $x,x_{0}\in\left[0,a\right]$
be such that $\left|x-x_{0}\right|<\delta$. First consider the case
$x<x_{0}$. Using additivity, 
\begin{equation}
f\left(x\right)+f\left(x_{0}-x\right)=f\left(x+x_{0}-x\right)=f\left(x_{0}\right)\,,\label{eq:-vecont}
\end{equation}
we find 
\begin{equation}
\left|f\left(x\right)-f\left(x_{0}\right)\right|=\left|f\left(x_{0}-x\right)\right|<\varepsilon\,.
\end{equation}
On the other hand, if $x>x_{0}$ we have 
\begin{equation}
f\left(x\right)=f\left(x-x_{0}+x_{0}\right)=f\left(x-x_{0}\right)+f\left(x_{0}\right)\,,
\end{equation}
and then 
\begin{equation}
\left|f\left(x\right)-f\left(x_{0}\right)\right|=\left|f\left(x-x_{0}\right)\right|<\varepsilon\,.
\end{equation}
It follows that $f$ is continuous on $\left[0,a\right]$. As in the
proof for Case \ref{enu: Cauchy 1}, Eqs. (\ref{eq:rational1}) and
(\ref{eq:rational2}) show that 
\begin{equation}
f\left(q\right)=f\left(1\right)q\,,
\end{equation}
for rational $q\in\left[0,a\right]$. Therefore, if $\left(q_{1},q_{2},\ldots\right)$
is a sequence of rational numbers converging to $x$, the function
$f(x)$ must be linear in $x$: 
\begin{equation}
f\left(x\right)=\lim_{j\rightarrow\infty}f\left(q_{j}\right)=\lim_{j\rightarrow\infty}f\left(1\right)q_{j}=f\left(1\right)x\,.
\end{equation}
\end{proof}
In Case \ref{enu:Cauchy 4}, where $f$ is Lebesgue measurable, the
proof of the analogous result for functions on the full real line
by Banach \cite{banach1920equation} is easily adapted to our setting.
Given Case \ref{enu:Cauchy 3}, it suffices to prove that $f$ is
continuous at $0$, i.e. that for every $\varepsilon>0$ there exists
a number $\delta>0$ such that 
\begin{equation}
\left|f\left(h\right)-f\left(0\right)\right|=\left|f\left(h\right)\right|<\varepsilon
\end{equation}
holds for all $0<h<\delta$.
\begin{proof}
Case \ref{enu:Cauchy 4}: Let $a/2<r<a$. Lusin's theorem \cite{lusin1912proprietes}
states that, for a Lebesgue measurable function $g$ on an interval
$J$ of Lesbesgue measure $\mu\left(J\right)=m$, there exists a compact
subset of any measure $m^{\prime}<m$ such that the restriction of
$g$ to this subset is continuous. Thus we may find a compact set
$F\subset\left[0,a\right]$ with $\mu\left(F\right)\geq r$ on which
$f$ is continuous. Let $\varepsilon>0$ be given. Since $F$ is compact,
$f$ is uniformly continuous on $F$ and there exists a $\delta\in\left(0,2r-a\right)$
such that 
\begin{equation}
\left|f\left(x\right)-f\left(y\right)\right|<\varepsilon
\end{equation}
is valid for two numbers $x,y\in F$ such that $\left|x-y\right|<\delta$.
Let $h\in\left(0,\delta\right)$. Suppose $F$ and $F-h=\left\{ x-h|x\in F\right\} $
were disjoint. Then we would have 
\begin{equation}
a+h=\mu\left(\left[-h,a\right]\right)\geq\mu\left(F\cup\left(F-h\right)\right)=\mu\left(F\right)+\mu\left(F-h\right)\geq2r\,,
\end{equation}
which contradicts $h<\delta<2r-a$. Taking a point $x\in F\cap\left(F-h\right)$
then a number $\delta\in\left(0,2r-a\right)$ can be found such that
\begin{equation}
\begin{aligned}\left|f\left(h\right)\right|=\left|f\left(x\right)-f\left(x\right)-f\left(h\right)\right|=\left|f\left(x\right)-f\left(x+h\right)\right| & <\varepsilon\,,\end{aligned}
\end{equation}
for $h\in\left(0,\delta\right)$. Hence, remembering that $f(0)=0$,
the function $f(x)$ is continuous at $x=0$.
\end{proof}

\end{document}